\documentclass[11pt]{article}
\usepackage[T1]{fontenc}
\usepackage[utf8]{inputenc}
\usepackage[english]{babel}
\usepackage{lmodern,microtype}
\usepackage{amsmath,amssymb,amsthm,mathtools,graphicx,booktabs,float}
\usepackage[authoryear,round]{natbib}
\usepackage[hidelinks]{hyperref}
\usepackage{caption}
\usepackage[a4paper,margin=1in]{geometry}
\newtheorem{proposition}{Proposition}
\newtheorem{theorem}[proposition]{Theorem}
\newcommand{\NC}{\mathrm{NC}}
\newcommand{\B}{\mathrm{B}}

\title{Climate Cooperation and Spatial Industrial Relocation:\\ Stability, Emissions, and Regional Policy Incentives}
\author{Angela Martiradonna\\
\small Department of Economics, University of Foggia, Italy\\
\small \texttt{angela.martiradonna@unifg.it}}
\date{}

\begin{document}
\maketitle

\begin{abstract}
Climate policy changes production costs and can reshape the geography of production. We study this spatial channel in a focused model where mobile industrial activity responds to local attractiveness, agglomeration strengthens attractiveness, and carbon policy lowers the profitability of emissions-intensive locations. The model yields an explicit carbon-policy threshold separating spatial instability from stability. When emissions intensity rises with local industrial density, spatial concentration raises aggregate emissions even with unchanged total industrial activity. Embedding these dynamics in a two-jurisdiction policy game creates a wedge between regional and collective incentives: each government values the industrial activity located within its own territory, while aggregate welfare internalizes the effect of policy on economy-wide concentration. With transferable utility, cooperation can select a policy on the stable side of the threshold while the non-cooperative policy lies on the unstable side. A numerical exercise illustrates this regime change. European regional evidence, based on industrial emissions and gross value added at NUTS2 level, documents a positive emissions--activity association and identifies the empirical quantities needed to measure relocation and concentration effects. The analysis links climate cooperation to the spatial organization of production and establishes a benchmark for policy-induced industrial relocation.
\end{abstract}

\noindent\textbf{Keywords:} carbon leakage; industrial relocation; spatial dynamics; climate cooperation; agglomeration; regional policy.

\section{Introduction}

Climate policy changes the relative profitability of producing across jurisdictions, and firms respond through location, investment and output decisions. This spatial adjustment is central to carbon leakage and industrial competitiveness. When environmental policy becomes more stringent in one location, emissions-intensive activity may contract there, expand elsewhere or redirect new investment across space. The environmental consequence therefore depends on the geography created by relocation as well as on the amount of activity displaced.

The spatial configuration can itself influence subsequent adjustment. Industrial concentration creates agglomeration forces through local supply networks, labor markets, infrastructure, knowledge spillovers and market access. A policy-induced relocation may therefore alter the attractiveness of both origin and destination regions and generate further movements of production. Small initial differences can dissipate, or they can become self-reinforcing and lead to persistent industrial concentration. Climate policy can consequently affect the stability of industrial geography together with the distribution of production and emissions.

This mechanism calls for an explicitly spatial approach. A regional or aggregate measure of displaced production records the amount of relocation, while a spatial model also describes where production accumulates, how neighboring locations interact and whether the resulting configuration persists. These features become particularly relevant when emissions intensity depends on local industrial density. In that case, two allocations with the same aggregate production can generate different aggregate emissions because they differ in spatial concentration. The geography of production then becomes an economic and environmental state variable.

A large literature studies carbon leakage, environmental regulation and competitiveness. General-equilibrium and policy-design models analyze how unilateral environmental policy affects production and emissions across jurisdictions and how trade exposure, market structure and policy design shape leakage incentives
\citep{BurniauxMartins2000,Boehringer2012,BrangerQuirion2014}.
Empirical work examines trade, employment, investment and plant-location responses to environmental regulation, with particularly important effects in pollution- and energy-intensive sectors
\citep{DechezlepretreSato2017,MartinEtAl2014,NaegeleZaklan2019,DechezlepretreEtAl2022}.
Recent firm-level evidence also shows that geographically differentiated regulation can redirect investment across regions and that agglomeration economies influence destination choices
\citep{LiuEtAl2026}.

A related literature brings economic geography and agglomeration into the analysis of environmental regulation. Agglomeration can modify firms' responses to environmental stringency and shape the location of emissions-intensive production
\citep{ZengZhao2009,Feddersen2012,BenKhederZugravu2012}.
More broadly, quantitative spatial economics provides a framework for studying how policy shocks propagate across heterogeneous locations through trade, mobility and local interactions \citep{ReddingRossiHansberg2017}. Recent applications show that carbon regulation can reorganize economic activity across regions and alter regional welfare through spatial linkages
\citep{ChenEtAl2024,ConteEtAl2026,HuangEtAl2026}.

The policy component is also related to environmental federalism and strategic environmental policy with mobile firms. This literature studies how jurisdictions balance environmental objectives against the value of attracting or retaining mobile economic activity and how decentralization shapes policy incentives
\citep{DijkstraFredriksson2010,RichterRunkelSchmidt2021}.

Our analysis focuses on a further dimension that connects these strands: the stability of the geography generated by relocation. The key issue is whether a spatial allocation absorbs local disturbances or amplifies them through a feedback between industrial density and location attractiveness. This distinction matters for environmental outcomes when concentration affects emissions intensity, and it matters for policy when regional governments and a coordinated authority value relocation differently. The spatial distribution of activity therefore links carbon-policy exposure, agglomeration dynamics and strategic policy interaction.

This perspective leads to three research questions. First, under what conditions does carbon policy change the stability of the spatial distribution of industrial activity? Second, how does a change in spatial concentration affect aggregate emissions when total industrial activity is held fixed? Third, can decentralized and cooperative policy choices place the same economy in different spatial regimes?

We address these questions with a dynamic spatial model in which industrial activity responds to local profitability differences. Agglomeration, spatial spillovers and carbon-policy exposure jointly determine location attractiveness, generating feedback between the current distribution of production and subsequent relocation. We characterize the stability of the spatial allocation, connect concentration dynamics to aggregate emissions and study how decentralized and cooperative policy choices interact with the spatial regime of production.

We complement the theoretical analysis with European regional evidence. Using two measures of industrial CO$_2$ emissions together with NUTS2 industrial activity, we document the regional activity--emissions relationship that motivates the model and identify the empirical objects required for a structural analysis of relocation and concentration. The theoretical and empirical exercises together define the extensions developed in the discussion.

The remainder of the paper proceeds as follows. Section~\ref{sec:model} introduces the spatial economy. Section~\ref{sec:stability} studies spatial stability, and Section~\ref{sec:emissions} examines the emissions implications of industrial concentration. Section~\ref{sec:game} introduces regional policy choice and cooperation. Section~\ref{sec:numerical} provides a numerical illustration, Section~\ref{sec:empirical} presents the European regional evidence, and Section~\ref{sec:agenda} discusses extensions and open research questions.

\section{A one-dimensional spatial economy}
\label{sec:model}

\subsection{Industrial location and local attractiveness}

Consider a spatial economy represented by the interval
$\Omega=(0,L)$, where $x\in\Omega$ indexes location and $t\geq0$
denotes time. Industrial activity is described by a density
$u(x,t)\geq0$: the quantity $u(x,t)\,dx$ measures the amount of
mobile industrial production located in a small neighborhood of $x$
at time $t$.

The second state variable, $a(x,t)$, represents local attractiveness.
It is a reduced-form profitability signal that summarizes
location-specific factors affecting firms' incentives to operate at
$x$, including agglomeration advantages, market access,
infrastructure, and other local operating conditions. Firms respond to spatial differences
in this signal, so differences in $a$ generate incentives to relocate
production.

Industrial activity moves through two channels. The first is
undirected spatial mobility: in the absence of profitability
differences, activity tends to spread across nearby locations. The
second is directed relocation toward locations with higher
attractiveness. We represent these two forces by
\begin{equation}
\label{eq:u}
\frac{\partial u}{\partial t}
=
D_u\frac{\partial^2u}{\partial x^2}
-
\chi\frac{\partial}{\partial x}
\left[
m(u)\frac{\partial a}{\partial x}
\right],
\end{equation}
where $D_u>0$ measures undirected spatial mobility and $\chi>0$
measures the responsiveness of industrial location to spatial
differences in attractiveness.

The mobility term is
\begin{equation}
\label{eq:mobility}
m(u)
=
u\left(1-\frac{u}{K}\right),
\qquad
0\leq u\leq K,
\end{equation}
where $K>0$ represents a reduced-form local capacity for industrial
concentration. The factor $u$ implies that directed relocation cannot
originate from a location with no industrial activity. The factor
$1-u/K$ progressively limits directed mobility as local density
approaches capacity. Hence the relocation response is strongest at
intermediate levels of industrial density. Throughout the benchmark, we
restrict attention to admissible trajectories satisfying $0\leq u\leq K$;
the local stability analysis is conducted around interior homogeneous states
with $0<\bar u<K$.

Carbon policy enters the model through an effective policy intensity
$\tau(x,t)\geq0$. After normalization, $\tau$ can represent the
incidence of a carbon tax, an emissions-permit price, or other regulatory
costs associated with carbon emissions. Higher values correspond to more
stringent carbon policy. The notation $\tau(x,t)$ allows policy
intensity to vary across locations and over time. The analytical
stability benchmark below specializes it to a spatially uniform and
time-constant policy, while the two-jurisdiction game assigns
different constant policy levels to the two regions.

Let $\varepsilon(u)$ denote emissions intensity, measured as emissions
per unit of industrial activity at local density $u$. The product
$\tau(x,t)\varepsilon(u)$ therefore represents the carbon-policy burden
per unit of industrial activity at location $x$ and time $t$. The dynamic
specification allows location-specific profitability advantages to persist
and diffuse gradually instead of adjusting instantaneously.
Local attractiveness evolves according to
\begin{equation}
\label{eq:a}
\frac{\partial a}{\partial t}
=
D_a\frac{\partial^2a}{\partial x^2}
-
\beta a
+
\alpha u
-
\tau(x,t)\varepsilon(u),
\end{equation}
where $D_a,\beta,\alpha>0$. The term $\alpha u$ captures local
agglomeration benefits: a larger industrial concentration raises the
profitability of the location. The term $-\beta a$ captures the
dissipation of temporary location-specific advantages, while $D_a$
allows attractiveness conditions to spill over across neighboring
locations. The final term,
$-\tau(x,t)\varepsilon(u)$, captures the reduction in local
profitability generated by carbon-policy exposure.

Equations \eqref{eq:u} and \eqref{eq:a} generate the central spatial
feedback. A local increase in industrial density raises attractiveness
through agglomeration. A higher attractiveness level then attracts
additional industrial activity. Carbon policy acts on this feedback
by reducing the profitability of emissions-intensive production. The
model therefore allows a spatially uniform carbon instrument to change
the dynamics of industrial location by altering the strength of the
agglomeration--relocation mechanism.

\subsection{Emissions intensity and local emissions}

The benchmark specifies emissions intensity as
\begin{equation}
\label{eq:intensity}
\varepsilon(u)
=
\varepsilon_0+\varepsilon_1u,
\qquad
\varepsilon_0>0,
\quad
\varepsilon_1\geq0.
\end{equation}
The parameter $\varepsilon_0$ is baseline emissions intensity, while
$\varepsilon_1$ measures how emissions intensity changes with local
industrial density. When $\varepsilon_1>0$, denser industrial
locations are more emissions intensive at the margin. This reduced-form
relationship can capture congestion in access to cleaner energy or
infrastructure, local composition effects, or other mechanisms through
which concentration affects environmental performance. The affine
specification isolates this channel while preserving a simple
analytical structure.

It is useful to distinguish emissions intensity from local emissions.
The function $\varepsilon(u)$ measures emissions per unit of industrial
activity. Total production-based emissions per unit of space are
\begin{equation}
\label{eq:emissions_local}
E(u)
=
u\varepsilon(u)
=
\varepsilon_0u+\varepsilon_1u^2.
\end{equation}
Thus, $\varepsilon(u)$ describes the environmental intensity of
production, whereas $E(u)$ describes the emissions generated by all
industrial activity located at a given point.

This distinction determines whether spatial concentration has an
aggregate environmental effect. If $\varepsilon_1=0$, emissions are
proportional to industrial activity and relocation changes only the
spatial distribution of emissions. If $\varepsilon_1>0$, concentration
raises emissions intensity, so two spatial allocations with the same
aggregate amount of industrial activity can generate different total
emissions.

\subsection{Boundary conditions and fixed aggregate industrial activity}

The benchmark treats $\Omega=(0,L)$ as a closed spatial economy.
Industrial activity can relocate within the domain, but it cannot enter
or leave through its endpoints. To express this restriction, define
the net spatial flux of industrial activity as
\begin{equation}
\label{eq:flux}
J_u(x,t)
=
-D_u\frac{\partial u}{\partial x}
+
\chi m(u)\frac{\partial a}{\partial x}.
\end{equation}
The first component represents undirected spatial dispersion, while
the second represents directed relocation toward locations with higher
attractiveness. Equation \eqref{eq:u} can then be written as the
conservation equation
\[
\frac{\partial u}{\partial t}
=
-\frac{\partial J_u}{\partial x}.
\]

We impose no-flux conditions at the two endpoints:
\begin{equation}
\label{eq:bc}
J_u(x,t)=0,
\qquad
\frac{\partial a}{\partial x}(x,t)=0,
\qquad
x\in\{0,L\}.
\end{equation}
The condition $J_u=0$ means that no industrial activity crosses the
outer boundary of the modeled economy. Firms may relocate between
locations inside $\Omega$, but aggregate industrial activity cannot
change through inflows from or outflows to locations outside the
domain. The condition $\partial a/\partial x=0$ rules out attractiveness
gradients generated beyond the boundary and prevents the attractiveness
signal from diffusing across the endpoints.

Using \eqref{eq:flux}, the first boundary condition is equivalently
\[
D_u\frac{\partial u}{\partial x}
-
\chi m(u)\frac{\partial a}{\partial x}
=
0
\qquad
\text{at }x=0,L.
\]

The no-flux condition implies conservation of aggregate industrial
activity. Integrating \eqref{eq:u} over space gives
\[
\frac{d}{dt}
\int_0^L u(x,t)\,dx
=
-J_u(L,t)+J_u(0,t)
=
0.
\]
Hence
\begin{equation}
\label{eq:mass}
\int_0^L u(x,t)\,dx
=
U_0
\qquad
\text{for all }t,
\end{equation}
where $U_0$ is determined by the initial industrial distribution. The
average industrial density is therefore
\begin{equation}
\label{eq:mean_density}
\bar u
=
\frac{U_0}{L}.
\end{equation}

The fixed-mass assumption isolates spatial relocation from changes in
aggregate industrial scale. A decline in industrial activity in one
part of the economy must be matched by an increase elsewhere within
the domain. Entry, exit, aggregate demand responses, and changes in
total production are excluded from the benchmark. This allows the
analysis to identify separately the consequences of spatial
redistribution and concentration; these additional margins are
introduced as extensions in Section~\ref{sec:agenda}.

\section{Carbon policy and spatial stability}\label{sec:stability}

\subsection{The homogeneous allocation}

Consider a uniform and time-constant carbon policy, $\tau(x,t)=\bar\tau$. The corresponding homogeneous steady state is $(\bar u,\bar a)$, where
\begin{equation}\label{eq:abar}
\bar a
=
\frac{(\alpha-\varepsilon_1\bar\tau)\bar u-\varepsilon_0\bar\tau}{\beta}.
\end{equation}
A uniform carbon policy changes the strength of the feedback between a local increase in industrial density and the attractiveness of that location. Through this feedback, a spatially uniform instrument can affect spatial stability even when the policy level itself is uniform across locations.

To see this, consider a small spatial perturbation around the homogeneous allocation. The Neumann eigenvalues on $(0,L)$ are $\lambda_k=(k\pi/L)^2$. For each nonconstant mode $k\geq1$, the linearized dynamics are governed by
\begin{equation}\label{eq:mode_matrix}
M_k(\bar\tau)=
\begin{pmatrix}
-D_u\lambda_k & \chi m(\bar u)\lambda_k\\
\alpha-\varepsilon_1\bar\tau & -\beta-D_a\lambda_k
\end{pmatrix}.
\end{equation}
The upper-right term captures directed relocation toward a more attractive location. The lower-left term captures the effect of additional industrial density on attractiveness after accounting for carbon-policy exposure. Their product is the destabilizing feedback in the system.

\begin{proposition}[Policy threshold for spatial stability]\label{prop:threshold}
Suppose $0<\bar u<K$ and $\varepsilon_1>0$. Define
\begin{equation}\label{eq:tauc}
\tau_c
=
\frac{1}{\varepsilon_1}
\left[
\alpha-
\frac{D_u\left(\beta+D_a\pi^2/L^2\right)}{\chi m(\bar u)}
\right].
\end{equation}
If $\tau_c>0$, the homogeneous allocation is linearly unstable to spatial perturbations for $0\leq\bar\tau<\tau_c$ and linearly stable on the fixed-aggregate-activity subspace for $\bar\tau>\tau_c$. At $\bar\tau=\tau_c$, the first nonconstant mode is neutral.
\end{proposition}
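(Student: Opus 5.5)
The plan is to analyze each nonconstant Neumann mode separately through the $2\times2$ matrix $M_k(\bar\tau)$ in \eqref{eq:mode_matrix}, and then identify which mode loses stability first as $\bar\tau$ decreases.

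First, I would justify the modal reduction. Linearize \eqref{eq:u}--\eqref{eq:a} at $(\bar u,\bar a)$ with $\tau\equiv\bar\tau$. The linearized no-flux condition reduces to $\partial_x\tilde u=\partial_x\tilde a=0$ at the endpoints, because the linearized flux is $-D_u\partial_x\tilde u+\chi m(\bar u)\partial_x\tilde a$ and $\partial_x\tilde a=0$ there. Expand perturbations in the cosine basis $\cos(k\pi x/L)$. The constant mode $k=0$ has $\tilde u$ frozen by \eqref{eq:mass}, so on the fixed-aggregate-activity subspace it carries $\tilde u_0=0$ and $\dot{\tilde a}_0=-\beta\tilde a_0$, which is stable. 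This is exactly why stability is stated only on that subspace. For $k\ge1$ the dynamics are $\dot w=M_k(\bar\tau)w$.

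Second, I would apply the trace–determinant criterion. The trace is $\operatorname{tr}M_k=-D_u\lambda_k-\beta-D_a\lambda_k<0$ always, so stability of mode $k$ is equivalent to $\det M_k>0$, and a real positive eigenvalue appears exactly when $\det M_k<0$. Writing $g=\alpha-\varepsilon_1\bar\tau$,
\[
\det M_k=\lambda_k\bigl[D_u(\beta+D_a\lambda_k)-\chi m(\bar u)g\bigr].
\]
Since $\lambda_k>0$ and $m(\bar u)>0$ for $0<\bar u<K$, mode $k$ is unstable iff $g>D_u(\beta+D_a\lambda_k)/(\chi m(\bar u))$. The right-hand side is increasing in $k$, so the least restrictive condition is at $k=1$, where $\lambda_1=\pi^2/L^2$.

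Third, I would solve the $k=1$ condition for $\bar\tau$ using $\varepsilon_1>0$. This gives instability of mode 1 iff $\bar\tau<\tau_c$, with $\tau_c$ as in \eqref{eq:tauc}. For $\bar\tau>\tau_c$, the inequality $g<D_u(\beta+D_a\lambda_1)/(\chi m(\bar u))\le D_u(\beta+D_a\lambda_k)/(\chi m(\bar u))$ gives $\det M_k>0$ for every $k\ge1$. Together with the $k=0$ analysis, this yields linear stability. At $\bar\tau=\tau_c$, $\det M_1=0$ with negative trace, so mode 1 has a zero eigenvalue and is neutral, while higher modes remain strictly stable. The hypothesis $\tau_c>0$ simply ensures that the interval $[0,\tau_c)$ is nonempty.

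The main obstacle is not the algebra but the passage from modewise to genuine linear stability on the infinite-dimensional subspace. Here I would note that eigenvalues are uniformly bounded away from zero for large $k$. Indeed $\det M_k\sim D_uD_a\lambda_k^2$ and $\operatorname{tr}M_k\sim-(D_u+D_a)\lambda_k$, so the larger eigenvalue tends to $-\min\{D_u,D_a\}\lambda_k$-type negative values. This gives a uniform spectral gap, and hence exponential decay in $L^2$ via Parseval. I would also remark that the boundary linearization is the one delicate point, handled as above.
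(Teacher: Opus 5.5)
Your proposal is correct and follows essentially the same route as the paper's proof. Both use a negative trace for every nonconstant mode, show that the bracketed factor of $\det M_k$ increases in $\lambda_k$ so that $k=1$ is critical, solve $\det M_1=0$ for $\tau_c$, and discard the constant mode via mass conservation. Two of your steps are slightly more careful than the paper's treatment: your explicit check that the $\tilde a_0$ component decays at rate $\beta$, and your derivation of the Neumann conditions for the perturbations from the no-flux condition.

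The only loose point is in your optional final step, which the paper does not attempt. The blocks $M_k$ are non-normal, with off-diagonal entry $\chi m(\bar u)\lambda_k$. A uniform spectral gap therefore does not by itself give uniform $L^2$ decay; you also need a $k$-uniform bound on $\|e^{tM_k}\|$. That bound does hold here, because the diagonal decay rates also grow like $\lambda_k$, which keeps the transient growth bounded uniformly in $k$.
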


\begin{proof}
For every nonconstant Neumann mode $k\geq1$, the trace of the matrix in \eqref{eq:mode_matrix} is
\[
\operatorname{tr}M_k(\bar\tau)
=-(D_u+D_a)\lambda_k-\beta<0.
\]
Its determinant is
\[
\det M_k(\bar\tau)
=
\lambda_k\left[
D_u(\beta+D_a\lambda_k)
-
\chi m(\bar u)(\alpha-\varepsilon_1\bar\tau)
\right].
\]
Since the trace is negative, mode $k$ is linearly asymptotically stable if and only if its determinant is positive. For $k\geq1$, the term in square brackets is strictly increasing in $\lambda_k$, so the first nonconstant mode, $\lambda_1=\pi^2/L^2$, is the first one that can lose stability. Setting its determinant equal to zero gives
\[
D_u\left(\beta+D_a\frac{\pi^2}{L^2}\right)
=
\chi m(\bar u)(\alpha-\varepsilon_1\bar\tau),
\]
which yields \eqref{eq:tauc}. If $\bar\tau>\tau_c$, the determinant is positive for every $k\geq1$; if $0\leq\bar\tau<\tau_c$, the determinant of the first nonconstant mode is negative and $M_1$ has one positive eigenvalue. At $\bar\tau=\tau_c$, $M_1$ has one zero eigenvalue and one negative eigenvalue. The constant mode is excluded by the fixed-aggregate-activity constraint in \eqref{eq:mass}, which proves the stated result on the admissible subspace.
\end{proof}

\subsection{Economic interpretation of the threshold}

The threshold compares two forces. The term
\[
\chi m(\bar u)(\alpha-\varepsilon_1\bar\tau)
\]
measures the strength of the agglomeration--relocation feedback. A local increase in industrial activity raises attractiveness; the attractiveness differential then attracts additional activity. The term
\[
D_u\left(\beta+D_a\pi^2/L^2\right)
\]
collects the forces that smooth spatial differences: industrial mobility, diffusion of attractiveness and dissipation of location-specific profitability advantages.

Several comparative-static implications follow directly. Stronger agglomeration $\alpha$ or greater responsiveness to attractiveness differences $\chi$ raises the policy level required for stability. Stronger undirected spatial mobility $D_u$, faster dissipation $\beta$, and stronger diffusion of attractiveness $D_a$ make the homogeneous allocation easier to stabilize. A larger spatial domain reduces the first positive eigenvalue $\pi^2/L^2$, making broad low-frequency spatial differences more persistent and increasing the policy threshold. Carbon policy therefore weakens a self-reinforcing spatial mechanism whose strength depends on both economic agglomeration and geography.

For policy, $\tau_c$ is a regime boundary. Moving $\bar\tau$ across $\tau_c$ changes whether small spatial disparities are amplified or dampened. Optimal-tax calculations address a separate welfare question. Two policy regimes that differ only modestly in the level of the carbon instrument can therefore generate qualitatively different spatial dynamics when they lie on opposite sides of the threshold.

If $\tau_c<0$, the homogeneous allocation is already locally stable at zero policy intensity, and no positive policy threshold is required for stability. If $\tau_c=0$, the first nonconstant mode is neutral at zero policy intensity and becomes stable for every $\bar{\tau}>0$.

If $\varepsilon_1=0$, a spatially uniform carbon policy changes the level of local attractiveness but not the linear agglomeration--relocation feedback. It therefore does not shift the local spatial-stability condition. The dependence of emissions intensity on local industrial density is thus the channel through which a uniform carbon instrument affects the stability threshold in the benchmark.

\section{Spatial concentration and aggregate emissions}\label{sec:emissions}

Using \eqref{eq:mass}, aggregate emissions can be decomposed as
\begin{align}
\mathcal E[u(t)]
&:=\int_0^L E(u(x,t))\,dx\nonumber\\
&=\varepsilon_0U_0
+\varepsilon_1\frac{U_0^2}{L}
+\varepsilon_1\int_0^L\big(u(x,t)-\bar u\big)^2\,dx.
\label{eq:emissions_decomposition}
\end{align}
The first two terms depend only on aggregate industrial activity and the size of the spatial domain. The final term depends on the spatial concentration of production.

\begin{proposition}[Spatial concentration and aggregate emissions]\label{prop:emissions}
For fixed aggregate industrial activity $U_0$, if $\varepsilon_1>0$ the homogeneous allocation uniquely minimizes aggregate emissions. For any admissible allocation,
\begin{equation}\label{eq:emissions_excess}
\mathcal E[u]-\mathcal E[\bar u]
=
\varepsilon_1\int_0^L\big(u(x)-\bar u\big)^2\,dx.
\end{equation}
If $\varepsilon_1=0$, aggregate emissions are independent of the spatial distribution of industrial activity.
\end{proposition}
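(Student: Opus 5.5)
The proof is a direct computation: I will establish the decomposition \eqref{eq:emissions_decomposition} for an arbitrary admissible allocation with fixed mass, and then read off all three claims from it. Let $u$ be any admissible allocation with $\int_0^L u\,dx=U_0$ and set $\bar u=U_0/L$ as in \eqref{eq:mean_density}. Write $u=\bar u+v$, where $v:=u-\bar u$. The mass constraint \eqref{eq:mass} then says exactly that $\int_0^L v\,dx=0$.

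Next I substitute into $E(u)=\varepsilon_0u+\varepsilon_1u^2$ and integrate term by term. The linear term gives $\varepsilon_0U_0$. The quadratic term expands as
\[
\int_0^L(\bar u+v)^2\,dx=L\bar u^2+2\bar u\int_0^L v\,dx+\int_0^L v^2\,dx .
\]
The middle integral vanishes by the zero-mean property of $v$, and $L\bar u^2=U_0^2/L$. This yields \eqref{eq:emissions_decomposition}. Evaluating the same formula at the homogeneous allocation ($v\equiv0$, which is admissible since $0<\bar u\le K$ whenever $u$ is admissible) gives $\mathcal E[\bar u]=\varepsilon_0U_0+\varepsilon_1U_0^2/L$. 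Subtracting the two expressions gives \eqref{eq:emissions_excess}.

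For the minimization claim with $\varepsilon_1>0$, the right-hand side of \eqref{eq:emissions_excess} is nonnegative, so $\bar u$ is a minimizer. For uniqueness, if $\mathcal E[u]=\mathcal E[\bar u]$ then $\int_0^L(u-\bar u)^2\,dx=0$, so $u=\bar u$ almost everywhere. This is uniqueness in the $L^2$ sense, which is the natural sense for densities. If I want pointwise uniqueness, I would restrict to continuous allocations, where $u=\bar u$ everywhere follows. The only point needing care is specifying the function class, say $u\in L^2(0,L)$ with $0\le u\le K$, so that all integrals are finite. Boundedness by $K$ guarantees this.

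Finally, when $\varepsilon_1=0$, $\mathcal E[u]=\varepsilon_0\int_0^L u\,dx=\varepsilon_0U_0$, which depends only on the aggregate activity $U_0$ and not on how it is distributed in space. None of these steps is a real obstacle. The only subtlety is stating uniqueness up to null sets.
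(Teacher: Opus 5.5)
Your proof is correct and follows essentially the same route as the paper. The paper also starts from the identity $\int_0^L(u-\bar u)^2\,dx=\int_0^L u^2\,dx-U_0^2/L$, which is your expansion with the zero-mean cross term removed. It then substitutes this into aggregate emissions, subtracts the homogeneous value $\varepsilon_0U_0+\varepsilon_1U_0^2/L$, and concludes uniqueness up to null sets, exactly as you do.
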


\begin{proof}
Because $\bar u=U_0/L$,
\[
\int_0^L\big(u(x)-\bar u\big)^2\,dx
=
\int_0^L u(x)^2\,dx-\frac{U_0^2}{L}.
\]
Substituting this identity into aggregate emissions gives \eqref{eq:emissions_decomposition}. For the homogeneous allocation,
\[
\mathcal E[\bar u]
=
\varepsilon_0U_0+\varepsilon_1\frac{U_0^2}{L}.
\]
Subtracting this expression from \eqref{eq:emissions_decomposition} yields \eqref{eq:emissions_excess}. If $\varepsilon_1>0$, the right-hand side is nonnegative and equals zero only when $u(x)=\bar u$ almost everywhere. If $\varepsilon_1=0$, the concentration term vanishes and aggregate emissions equal $\varepsilon_0U_0$ for every admissible spatial allocation.
\end{proof}

The proposition gives an environmental interpretation to the stability result. When $\varepsilon_1>0$, an unstable homogeneous allocation can evolve toward a more concentrated industrial geography and, through that concentration, toward higher aggregate emissions. Conversely, a policy that stabilizes a dispersed allocation can limit the emissions increase associated with endogenous concentration. This effect operates even though the model holds total industrial activity fixed.

The distinction between scale and concentration is useful. A conventional scale effect asks how total emissions change when aggregate production changes. Here aggregate production is constant, so the only channel is spatial organization. This isolates a mechanism that is easily obscured in aggregate data: the environmental consequence of where production takes place and how densely it is concentrated.

\section{Strategic policy and cooperation}\label{sec:game}

\subsection{Regional objectives evaluated on spatial trajectories}

For the policy analysis, we normalize $L=1$ and divide the spatial economy into two jurisdictions,
\[
\Omega_1=(0,1/2),
\qquad
\Omega_2=(1/2,1).
\]
Government $i$ chooses a time-constant policy $\tau_i\in[0,\tau_{\max}]$. For every policy pair $\boldsymbol\tau=(\tau_1,\tau_2)$, the spatial system generates a trajectory $(u^{\boldsymbol\tau,s},a^{\boldsymbol\tau,s})$ from initial-state scenario $s$. The scenarios receive nonnegative weights $q_s$ with $\sum_s q_s=1$. All scenarios share the same aggregate industrial activity $U_0$ and differ only in its initial spatial distribution.

\begin{figure}[t]
    \centering
    \includegraphics[width=\textwidth]{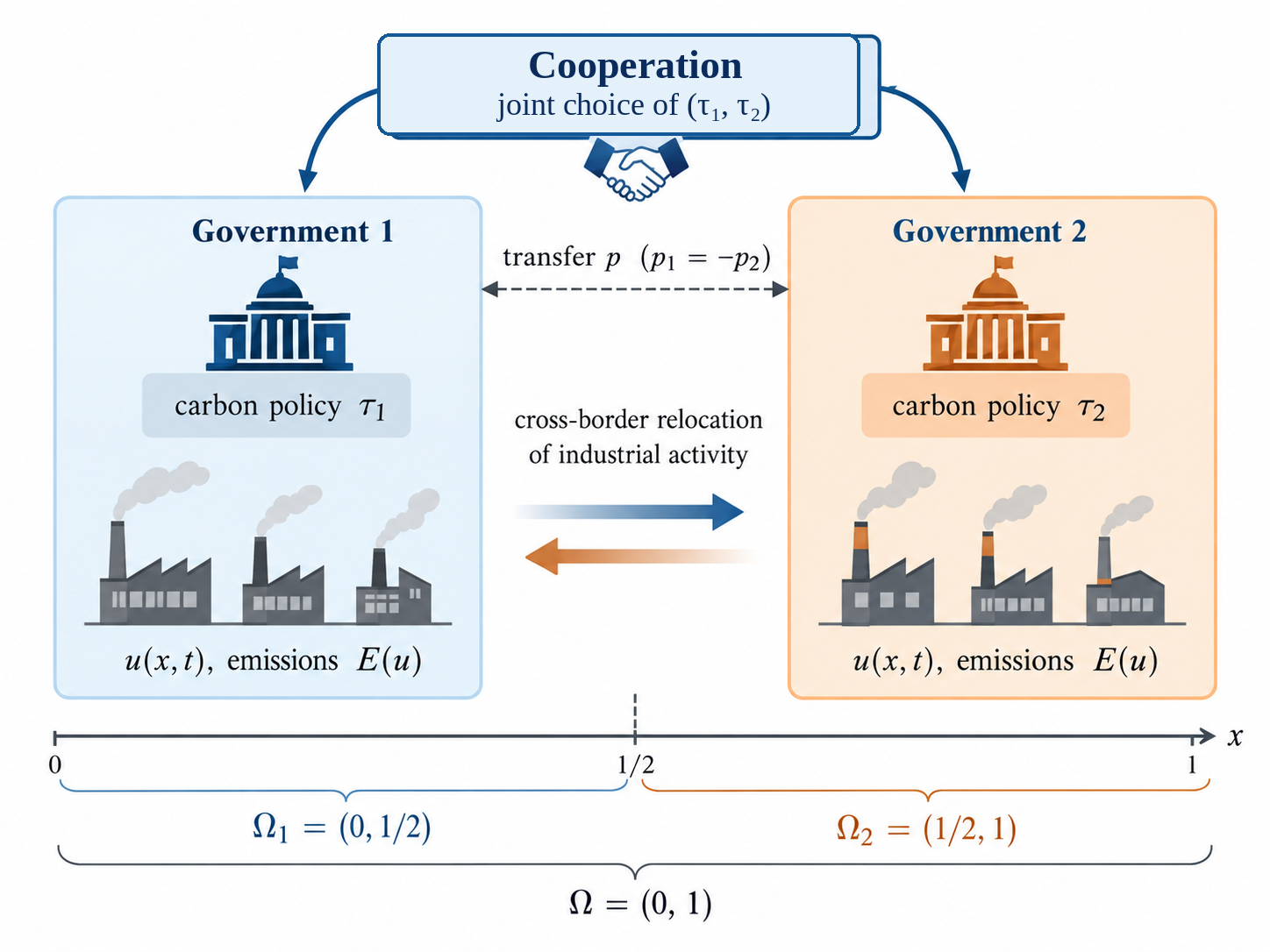}
    \caption{Spatial and institutional structure of the two-jurisdiction policy problem. The economy is divided into $\Omega_1=(0,1/2)$ and $\Omega_2=(1/2,1)$. Governments choose carbon-policy intensities $\tau_1$ and $\tau_2$, while industrial activity can relocate across the common border. Under cooperation, the two governments jointly select the policy pair and may use transfers to allocate the cooperative surplus.}
    \label{fig:bargaining_scheme}
\end{figure}

Figure~\ref{fig:bargaining_scheme} illustrates the two-jurisdiction policy environment, the cross-border mobility of industrial activity, and the cooperative policy arrangement considered below.

For each jurisdiction, define the regional spatial payoff
\begin{equation}
F_i(\boldsymbol\tau)
=
\sum_s q_s
\int_0^T e^{-\rho t}
\int_{\Omega_i}
\left[
r\,u^{\boldsymbol\tau,s}(x,t)
-
E\!\left(u^{\boldsymbol\tau,s}(x,t)\right)
\right]
dx\,dt.
\label{eq:regional_payoff}
\end{equation}
The first term values the industrial activity located within the jurisdiction, while the second records its territorial emissions cost.
Thus, relocation changes regional welfare even when aggregate industrial activity in the spatial economy is fixed.

The government's total objective is
\begin{equation}\label{eq:W}
W_i^\eta(\boldsymbol\tau)
=
b\tau_i-\frac{c}{2}\tau_i^2+h\tau_j+\eta F_i(\boldsymbol\tau),
\qquad j\neq i.
\end{equation}
The reduced-form term $b\tau_i-c\tau_i^2/2$ captures the direct domestic policy benefit and cost, $h\tau_j$ allows for an exogenous benefit from the other jurisdiction's policy, and $\eta\geq0$ controls the importance of the spatial channel. The policy variable is an effective carbon-policy index, and its numerical value measures policy intensity on a dimensionless scale.

For an interior non-cooperative policy, government $i$ satisfies
\begin{equation}\label{eq:nc_foc}
b-c\tau_i
+
\eta\frac{\partial F_i}{\partial\tau_i}(\boldsymbol\tau)
=0.
\end{equation}
The final term is the strategic spatial margin. A unilateral increase in $\tau_i$ may improve the government's direct policy objective while reducing the amount of industrial activity located in its territory. This creates a tension between environmental policy and the regional value of the industrial base.

\subsection{Why the aggregate incentive is different}

Conservation makes the distinction between regional and aggregate incentives particularly clear. Define the weighted discounted concentration measure
\begin{equation}\label{eq:VT}
\mathcal V_T(\boldsymbol\tau)
=
\sum_s q_s
\int_0^T e^{-\rho t}
\int_0^1
\left(u^{\boldsymbol\tau,s}(x,t)-\bar u\right)^2dx\,dt.
\end{equation}
Summing the two regional payoffs gives
\begin{equation}\label{eq:aggregate_identity}
F_1(\boldsymbol\tau)+F_2(\boldsymbol\tau)
=
H_\rho(T)
\left[
 rU_0-\varepsilon_0U_0-\varepsilon_1U_0^2
\right]
-
\varepsilon_1\mathcal V_T(\boldsymbol\tau),
\end{equation}
where $H_\rho(T)=\int_0^T e^{-\rho t}dt$.

Equation \eqref{eq:aggregate_identity} is the key bridge between the spatial model and the policy game. The aggregate production benefit $rU_0$ is fixed because the total industrial base is fixed. Relocating one unit of production from one jurisdiction to the other changes the distribution of regional benefits while leaving aggregate production value unchanged. At the collective level, the spatial component of policy therefore works through concentration and its emissions consequence.

This difference explains why regional and collective incentives can diverge. A government may be reluctant to tighten its own policy because it values the industrial activity that could move to the neighboring jurisdiction. At the aggregate level, industrial activity is retained after an internal relocation. The collective problem therefore asks whether the induced spatial reallocation increases or decreases economy-wide concentration and emissions.

Under cooperation with unrestricted quasilinear transfers, the governments select a policy pair that maximizes aggregate welfare whenever the resulting cooperative surplus is positive. Transfers determine how this surplus is distributed between the two jurisdictions. The policy choice is therefore governed by the aggregate objective, while transfers address the distribution of cooperative gains.

\subsection{Crossing the spatial-stability threshold}

To isolate the endogenous spatial mechanism, set $h=0$. At $\eta=0$, both regimes select the same reference policy $t_0=b/c$. In the crossing result, symmetry means that the two jurisdictions share the same primitives and that the initial-state scenarios are closed under reflection about $x=1/2$, with reflected scenarios receiving the same weight. We use the superscript $\B$ to denote the cooperative aggregate-welfare branch. Define the local responses
\begin{equation}\label{eq:slopes}
A_{\NC}(t)
=
\frac{1}{c}
\frac{\partial F_1}{\partial\tau_1}(t,t),
\qquad
A_{\B}(t)
=
\frac{1}{c}
\frac{\partial(F_1+F_2)}{\partial\tau_1}(t,t).
\end{equation}
The first response contains the regional value of retaining industrial activity. The second response removes pure redistribution across the two jurisdictions and reflects the effect of policy on economy-wide concentration.

The question is whether the difference between regional and collective marginal incentives can be large enough to move policy across the stability boundary. We study economies whose common reference policy is close to that boundary, so that a first-order institutional difference can change the spatial regime.

\begin{theorem}[Institutional crossing near the spatial threshold]\label{thm:crossing}
Suppose symmetric interior stationary policy branches exist and are continuously differentiable in $(\eta,t_0)$ in a neighborhood of $(0,\tau_c)$. Let $\tau_c$ be the stability threshold in Proposition~\ref{prop:threshold}. If there is a $\kappa>0$ such that
\begin{equation}\label{eq:crossing_condition}
A_{\NC}(\tau_c)<\kappa<A_{\B}(\tau_c),
\end{equation}
and the reference policy satisfies $b/c=\tau_c-\kappa\eta$, then for every sufficiently small $\eta>0$,
\begin{equation}\label{eq:crossing}
\tau^{\NC}(\eta)<\tau_c<\tau^{\B}(\eta).
\end{equation}
\end{theorem}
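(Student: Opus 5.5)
We argue by a first-order expansion of the two symmetric policy branches in $\eta$ around $\eta=0$, holding the reference policy tied to $\eta$ through $t_0=b/c=\tau_c-\kappa\eta$. At $\eta=0$ both regimes select $t_0$, so when in addition $t_0=\tau_c$ both branches pass through $\tau_c$. By the assumed continuous differentiability in $(\eta,t_0)$ near $(0,\tau_c)$, we can write $\tau^{\NC}(\eta)=\tau^{\NC}(\eta,t_0(\eta))$ and $\tau^{\B}(\eta)=\tau^{\B}(\eta,t_0(\eta))$. The chain rule then gives
\[
\frac{d\tau^{R}}{d\eta}(0)=\partial_\eta\tau^{R}(0,\tau_c)-\kappa\,\partial_{t_0}\tau^{R}(0,\tau_c),\qquad R\in\{\NC,\B\}.
\]
So the work reduces to computing the two partial derivatives from the first-order conditions.

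\emph{Non-cooperative branch.} Divide \eqref{eq:nc_foc} by $c$. Along the symmetric branch $\tau_1=\tau_2=\tau$ this reads
\[
t_0-\tau+\eta\,\tfrac1c\,\partial_{\tau_1}F_1(\tau,\tau)=0,
\]
which is $t_0-\tau+\eta A_{\NC}(\tau)=0$ by \eqref{eq:slopes}. Differentiating the identity at $(0,\tau_c)$ gives $\partial_{t_0}\tau^{\NC}=1$ and $\partial_\eta\tau^{\NC}=A_{\NC}(\tau_c)$. The factor $\eta$ kills the term involving the derivative of $A_{\NC}$. Hence
\[
\frac{d\tau^{\NC}}{d\eta}(0)=A_{\NC}(\tau_c)-\kappa<0.
\]

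\emph{Cooperative branch.} With $h=0$ and transferable utility, policy maximizes $W_1^\eta+W_2^\eta$. Its first-order condition in $\tau_1$ is
\[
b-c\tau_1+\eta\,\partial_{\tau_1}(F_1+F_2)=0.
\]
At a symmetric point this becomes $t_0-\tau+\eta A_{\B}(\tau)=0$. Symmetry is needed here: reflection about $x=1/2$, with reflection-closed scenario weights, swaps $F_1$ and $F_2$ under $(\tau_1,\tau_2)\mapsto(\tau_2,\tau_1)$. So the $\tau_2$ condition coincides with the $\tau_1$ condition at $\tau_1=\tau_2$, and a single scalar equation describes the symmetric branch. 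The same implicit differentiation gives
\[
\frac{d\tau^{\B}}{d\eta}(0)=A_{\B}(\tau_c)-\kappa>0.
\]

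Both branches equal $\tau_c$ at $\eta=0$, since $t_0(0)=\tau_c$. Because the branches are $C^1$, Taylor's theorem gives
\[
\tau^{\NC}(\eta)=\tau_c+\eta\bigl(A_{\NC}(\tau_c)-\kappa\bigr)+o(\eta),\qquad
\tau^{\B}(\eta)=\tau_c+\eta\bigl(A_{\B}(\tau_c)-\kappa\bigr)+o(\eta).
\]
The strict inequalities in \eqref{eq:crossing_condition} make the first-order terms strictly signed. Therefore, for all sufficiently small $\eta>0$, we get $\tau^{\NC}(\eta)<\tau_c<\tau^{\B}(\eta)$, which is \eqref{eq:crossing}. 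Proposition~\ref{prop:threshold} is used only to identify $\tau_c$ as the regime boundary.

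The main obstacle is justifying that the symmetric branches are described exactly by these scalar equations. One point is the symmetry argument just given. The other is that the partial derivatives of the $F_i$ exist and are continuous, so that $A_{\NC}$ and $A_{\B}$ are continuous near $\tau_c$. We take this from the hypothesis that interior stationary branches exist and are $C^1$, rather than proving differentiability of the PDE trajectories in the policy parameters. If more care is needed, we would apply the implicit function theorem to $G^{R}(\tau,\eta,t_0)=t_0-\tau+\eta A_{R}(\tau)$. Its $\tau$-derivative at $\eta=0$ is $-1\neq0$, so it has locally unique $C^1$ solutions, which must coincide with the assumed branches.
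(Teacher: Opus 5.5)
Your proposal is correct and follows essentially the same route as the paper. Both arguments expand the two symmetric branches to first order in $\eta$; substituting $t_0=\tau_c-\kappa\eta$ gives slopes $A_{\NC}(\tau_c)-\kappa<0$ and $A_{\B}(\tau_c)-\kappa>0$. Your chain-rule computation of $\partial_{t_0}\tau^R=1$ and $\partial_\eta\tau^R=A_R(\tau_c)$ from the first-order conditions, and your reflection argument reducing the cooperative conditions to one scalar equation, make explicit what the paper's proof states in compressed form.
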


\begin{proof}
For a fixed reference policy $t_0$, differentiability of the symmetric policy branches at $\eta=0$ and the first-order conditions imply
\[
\tau^{\NC}(\eta)
=t_0+\eta A_{\NC}(t_0)+o(\eta),
\qquad
\tau^{\B}(\eta)
=t_0+\eta A_{\B}(t_0)+o(\eta).
\]
Now set $t_0=b/c=\tau_c-\kappa\eta$. Smooth dependence of the two branches on the reference policy gives
\[
\tau^{\NC}(\eta)-\tau_c
=
\eta\big[A_{\NC}(\tau_c)-\kappa\big]+o(\eta),
\]
and
\[
\tau^{\B}(\eta)-\tau_c
=
\eta\big[A_{\B}(\tau_c)-\kappa\big]+o(\eta).
\]
By \eqref{eq:crossing_condition}, the coefficient of $\eta$ in the first expression is negative and the coefficient in the second is positive. Hence, for every sufficiently small $\eta>0$,
\(
\tau^{\NC}(\eta)<\tau_c<\tau^{\B}(\eta)
\), as claimed.
\end{proof}

Condition \eqref{eq:crossing_condition} says that the regional and collective spatial margins are sufficiently different around the stability boundary. The non-cooperative government places more weight on the domestic loss of industrial activity, while the cooperative problem internalizes the benefit of limiting economy-wide concentration. The distinction is qualitative: cooperation can change the spatial regime in addition to changing the level of the carbon-policy instrument. The economic implication is that institutions can determine whether initial spatial disparities are amplified or dampened. In the unstable regime, small differences in industrial location can become persistent agglomerations. In the stable regime, the same differences decay. Cooperation can therefore affect the long-run geography of production as well as aggregate emissions.

\section{Numerical illustration}\label{sec:numerical}

A dimensionless numerical example illustrates the mechanism. We use $D_u=0.04$, $D_a=0.10$, $\chi=0.20$, $\alpha=1$, $\beta=0.40$, $K=2$, $\bar u=1$, $\varepsilon_0=0.20$ and $\varepsilon_1=1$. The policy calculation uses a conservative cell-centered finite-volume discretization on $N=40$ cells with zero boundary fluxes, and the resulting stiff system is integrated with an implicit BDF method. Welfare is averaged over two equiprobable reflected first-mode initial-state scenarios, preserving ex ante jurisdictional symmetry. Symmetric stationary candidates are obtained from the own-welfare and aggregate-welfare first-order conditions, with policy derivatives evaluated by centered differences with step $\Delta_\tau=10^{-3}$. The nonlinear trajectories reported in Table~\ref{tab:illustration} and Figure~\ref{fig:paths} are recomputed on a finer $N=160$ grid from the same first-mode industrial-density perturbation; attractiveness is initialized at the corresponding homogeneous level plus the aligned perturbation. These calculations are used to illustrate the local mechanism and do not constitute a certification of global Nash or aggregate-welfare optima. The resulting continuum threshold is
\[
\tau_c=0.445216.
\]
The value is generated by the balance in Proposition~\ref{prop:threshold}: below it, the agglomeration--relocation feedback dominates the stabilizing forces for the first spatial mode; above it, that mode decays.

With $T=24$, $\rho=0.05$, $h=0$, $b/c=0.443$, $r=2.5$ and $\eta=0.015$, the numerical policy exercise gives a symmetric non-cooperative stationary policy candidate $\hat\tau^{\NC}=0.349275$ and an aggregate-welfare stationary candidate $\hat\tau^{\B}=0.446400$. Hence
\[
\hat\tau^{\NC}<\tau_c<\hat\tau^{\B}.
\]
The numerical exercise is designed to display the mechanism associated with the theoretical crossing. The non-cooperative candidate lies clearly in the unstable region, while the aggregate-welfare candidate lies just on the stable side of the threshold.

\begin{table}[!htbp]
\centering
\caption{Illustrative policy crossing and spatial implications at $T=24$.}
\label{tab:illustration}
\small
\begin{tabular}{lccc}
\toprule
Regime & Policy & $\int_0^1(u(x,24)-\bar u)^2dx$ & $\mathcal E[u(24)]$\\
\midrule
Non-cooperative candidate & 0.349275 & 0.034267 & 1.234267\\
Aggregate-welfare candidate & 0.446400 & 0.003302 & 1.203302\\
\bottomrule
\end{tabular}
\end{table}

The concentration measure in Table~\ref{tab:illustration} is more than ten times larger under the non-cooperative candidate. Because the emissions function contains the same squared concentration term, the difference is reflected directly in aggregate emissions. The simulation therefore reproduces both links in the analytical mechanism: the institutional policy choice changes spatial stability, and the spatial regime changes emissions through concentration.

\begin{figure}[t]
\centering
\includegraphics[width=.87\linewidth]{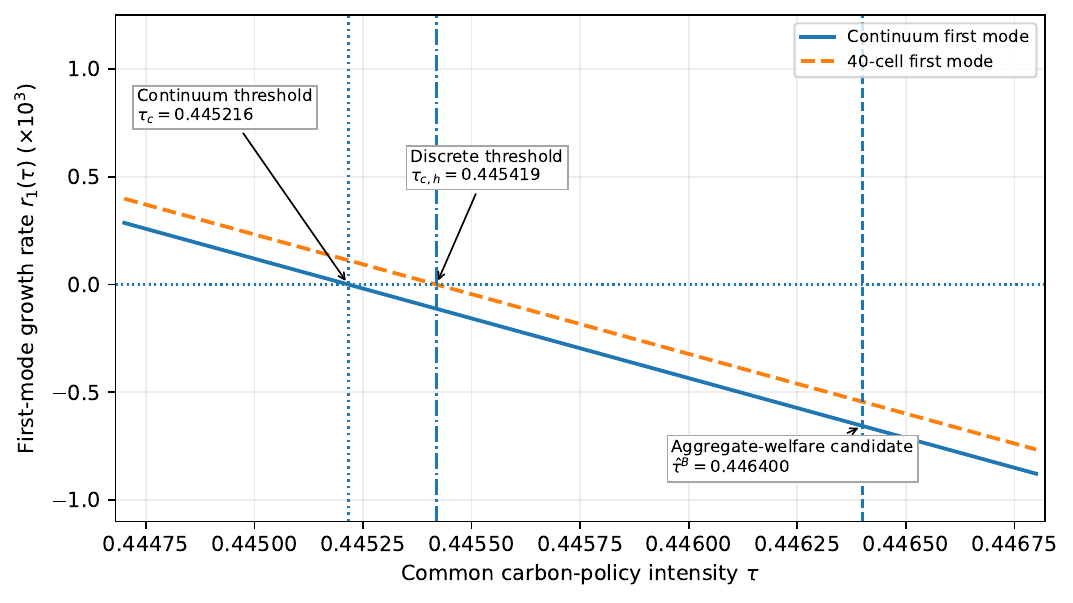}
\caption{Policy-dependent first-mode growth near the spatial stability threshold. Continuous and 40-cell discrete curves use their respective critical policy intensities. The aggregate-welfare candidate lies on the linearly stable side of the threshold.}
\label{fig:threshold}
\end{figure}

Figure~\ref{fig:threshold} resolves the narrow neighborhood of the stability boundary. The sign of the continuum first-mode growth rate changes at $\tau_c$, while the finite-grid calculation produces a nearby discrete threshold. The aggregate-welfare candidate lies just on the stable side; the non-cooperative candidate is far below this zoomed interval and, as reported above, lies in the unstable region.

\begin{figure}[H]
\centering
\includegraphics[width=.82\linewidth]{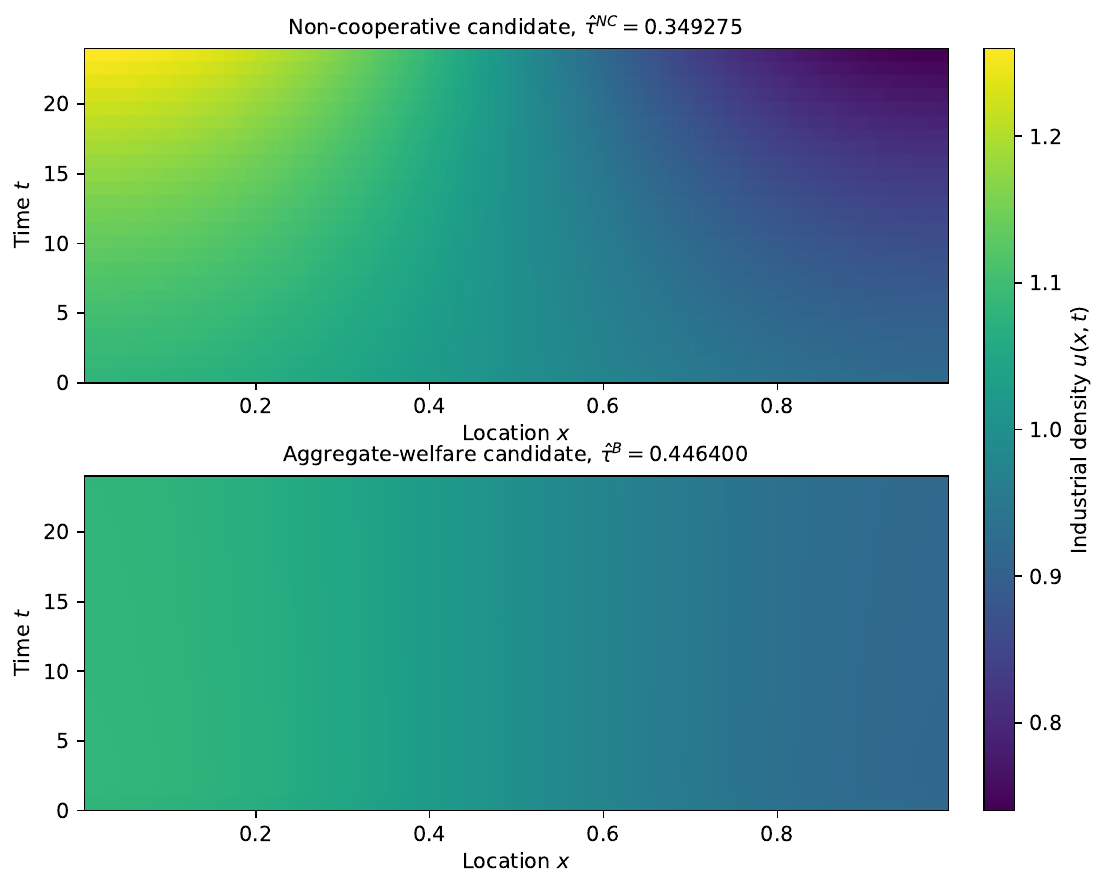}
\caption{Computed one-dimensional space--time industrial-density fields from identical first-mode perturbations under the non-cooperative and aggregate-welfare stationary policy candidates, shown on a common scale.}
\label{fig:paths}
\end{figure}

Figure~\ref{fig:paths} shows the nonlinear trajectories from identical initial perturbations. Under the non-cooperative candidate, the perturbation grows and produces a pronounced spatial concentration. Under the aggregate-welfare candidate, the spatial difference remains much smaller. The exercise illustrates how a policy difference that appears modest on the policy scale can generate a large difference in the spatial organization of production when it crosses a stability threshold.

The analysis also distinguishes endogenous agglomeration from carbon leakage. Agglomeration refers to a nonhomogeneous allocation emerging under a given policy configuration. Leakage is a policy-counterfactual concept: a tighter policy in one jurisdiction induces production or emissions to move outside that jurisdiction. The two phenomena can interact, because relocation may initiate or reinforce agglomeration elsewhere, but they are conceptually distinct and require different empirical measurements.

\section{European regional evidence}
\label{sec:empirical}

The empirical analysis provides a regional anchor for the spatial mechanism and identifies the quantities required for a structural assessment of relocation and concentration. The model points to two distinct empirical relationships: the association between industrial activity and territorial emissions, and the response of the spatial distribution of industrial activity to differences in carbon-policy exposure. The analysis below documents the first relationship using European regional data and examines its sensitivity to alternative geographical normalizations and fixed-effect structures.

\subsection{Data}

We combine four data sources. Industrial-facility CO$_2$ emissions are drawn from the European Environment Agency's industrial reporting data \citep{EEAIndustrial2026}. Regional industrial gross value added is taken from Eurostat \citep{EurostatGVA2026}, and regional area from GISCO NUTS2 geometries \citep{GISCONUTS2024}. We complement the facility-based measure with EDGAR fossil-industry CO$_2$ emissions at the subnational level \citep{EDGARNUTS2026,CrippaEtAl2024}.

The matched panel covers 2021--2023 and contains 545 region-year observations for 193 NUTS2 regions in 26 European countries. The EEA and EDGAR regressions are estimated on the same 545 region-year observations. The two emissions measures capture different aspects of regional industrial emissions. EEA data are linked to reported industrial facilities and therefore provide a close connection to regulated production sites. EDGAR provides a broader territorial inventory of fossil-industry emissions. Their joint use allows us to assess whether the regional activity--emissions relationship is consistent across two distinct measurement systems.

For the EEA measure, the estimation sample contains region-year observations with positive reported emissions from the selected industrial facilities. Non-reported cells are excluded.

\subsection{Empirical specification}

For each emissions measure, we estimate
\begin{equation}
\label{eq:empirical}
\log \mathcal E_{\ell t}
=
\gamma \log Y_{\ell t}
+\delta \log A_{\ell}
+\theta M_{\ell t}
+\mu_{c(\ell),t}
+\xi_{\ell t},
\end{equation}
where $\mathcal E_{\ell t}$ denotes industrial CO$_2$ emissions in region $\ell$ and year $t$, $Y_{\ell t}$ is industrial gross value added, $A_{\ell}$ is regional area, and $M_{\ell t}$ is the manufacturing share of industrial activity, measured as manufacturing GVA (NACE C) divided by industrial GVA (NACE B--E). The term $\mu_{c(\ell),t}$ denotes country-by-year fixed effects. Standard errors are clustered at the NUTS2 level.

The coefficient $\gamma$ measures the within-country-year association between industrial activity and emissions after accounting for regional size and broad industrial composition. We interpret it as a descriptive elasticity. Regional aggregates combine several margins---industrial scale, technology, sector composition, and spatial concentration---so structural identification of the concentration parameter $\varepsilon_1$ requires more spatially resolved measures of industrial density and emissions intensity.

\subsection{Results}

Industrial activity is positively associated with emissions under both measurement systems. The estimated coefficient on industrial gross value added is $0.435$ for EEA facility emissions and $0.674$ for EDGAR fossil-industry emissions.

\begin{table}[!htbp]
\centering
\caption{Industrial activity and regional CO$_2$ emissions in Europe, 2021--2023. Brackets report 95\% confidence intervals based on NUTS2-clustered standard errors.}
\label{tab:empirical}
\footnotesize
\setlength{\tabcolsep}{4pt}
\begin{tabular}{p{0.43\linewidth}cc}
\toprule
 & EEA facility CO$_2$ & EDGAR industry CO$_2$\\
\midrule
$\log$ industrial GVA & 0.435 & 0.674\\
95\% confidence interval & [0.195, 0.675] & [0.527, 0.822]\\
Region-year observations & 545 & 545\\
Country-by-year fixed effects & Yes & Yes\\
Area and industrial-composition controls & Yes & Yes\\
\bottomrule
\end{tabular}
\end{table}

Because both emissions and industrial gross value added enter in logarithms, the estimates imply that a 10\% higher level of industrial gross value added is associated with approximately 4.35\% higher EEA facility emissions and 6.74\% higher EDGAR fossil-industry emissions, conditional on the included controls and fixed effects. The 95\% confidence intervals exclude zero in both specifications.

The difference in magnitudes is consistent with the distinct coverage of the two emissions measures. EEA data track emissions from reported industrial facilities, while EDGAR captures a broader territorial fossil-industry inventory. The positive association is consistent across the two emissions measurement systems.

For the spatial model, the relevant implication is that regions with larger industrial bases also tend to display higher industrial emissions after conditioning on regional area, broad industrial composition, and country-specific annual factors. This empirical relationship motivates the activity--emissions margin in the theoretical analysis.

\subsection{Robustness and sensitivity}

The positive activity--emissions association persists across several alternative descriptive specifications, although its magnitude depends on geographical normalization and on the emissions accounting system. Table~\ref{tab:empirical_robustness} reports the main sensitivity checks. In the preferred area-adjusted specification, the elasticity is $0.435$ for EEA facility emissions and $0.674$ for EDGAR fossil-industry emissions. Estimating the same levels model without the regional-area control yields similar positive coefficients, $0.464$ and $0.710$, respectively.

A density-on-density specification produces larger elasticities, $0.657$ for EEA and $0.766$ for EDGAR. This specification, however, implicitly restricts the coefficients on industrial GVA and regional area in the corresponding levels model to sum to one. Region-clustered tests reject this restriction for both EEA ($p=0.0028$) and EDGAR ($p=0.0335$). We therefore retain the specification with freely estimated regional area as the main descriptive benchmark.

\begin{table}[!htbp]
\centering
\caption{Sensitivity of the industrial-activity coefficient to alternative descriptive specifications. Brackets report 95\% confidence intervals based on NUTS2-clustered standard errors.}
\label{tab:empirical_robustness}
\footnotesize
\setlength{\tabcolsep}{4pt}
\begin{tabular}{p{0.47\linewidth}cc}
\toprule
 & EEA facility CO$_2$ & EDGAR industry CO$_2$\\
\midrule
Area-adjusted baseline & 0.435 & 0.674\\
 & [0.195, 0.675] & [0.527, 0.822]\\
Levels without area control & 0.464 & 0.710\\
 & [0.234, 0.694] & [0.569, 0.851]\\
Density normalization & 0.657 & 0.766\\
 & [0.452, 0.862] & [0.650, 0.883]\\
\bottomrule
\end{tabular}
\end{table}

The two emissions measures also permit a paired same-sample comparison. Under the area-adjusted specification, the EEA coefficient is $0.239$ lower than the EDGAR coefficient, with a region-clustered 95\% confidence interval of $[-0.454,-0.024]$ ($p=0.029$). This comparison is exploratory and should be interpreted cautiously because the two sources cover different emissions-accounting perimeters and the exercise involves several specification checks. Its main implication is that the positive activity--emissions relationship is common to both sources, while the estimated magnitude is sensitive to measurement and geographical normalization.

A more demanding within-region specification further illustrates the limits of the short panel. For EEA facility emissions, a model with NUTS2 and year fixed effects yields an industrial-activity coefficient of $0.020$ with a 95\% confidence interval of $[-0.429,0.469]$. The three-year panel therefore provides no precise estimate of within-region changes in the activity--emissions relationship. This result reinforces the descriptive interpretation of the baseline estimates and motivates the use of longer panels and more spatially resolved firm or facility data for causal analysis.

\subsection{Connection with the spatial mechanism}

The regional estimates identify the first empirical relationship highlighted by the model: industrial activity and territorial emissions move together across regions. The spatial mechanism involves two additional margins. The first is the response of industrial location to relative carbon-policy exposure. The second is the effect of local industrial concentration on emissions intensity.

A structural empirical analysis would therefore combine spatially resolved firm or facility data with policy variation across locations. Firm-level location responses could provide an empirical counterpart to the relocation sensitivity $\chi$. Measures of local industrial density and emissions intensity could identify the concentration channel represented by $\varepsilon_1$. Observed changes in the spatial distribution of production would then provide an empirical counterpart to the policy-induced change in $u(x,t)$.

This distinction also separates three related empirical objects. An activity--emissions regression measures how industrial scale and emissions covary across regions. A relocation analysis measures how relative policy exposure changes the location of industrial activity. A carbon-leakage analysis adds a policy counterfactual linking tighter regulation in one jurisdiction to changes in production or emissions elsewhere. The spatial mechanism studied here connects these objects through the evolution of industrial concentration.

Europe provides a useful setting for this extension because industrial production, emissions reporting, and carbon-policy exposure can be observed across neighboring regions within a highly integrated market. Combining these data at the firm, facility, and regional levels would make it possible to separate scale, relocation, and concentration effects and to assess which spatial regime described by the model is empirically relevant.

\section{Discussion and extensions}
\label{sec:agenda}

The analysis identifies a spatial channel through which carbon policy can affect industrial organization as well as emissions. Carbon policy changes local profitability, profitability differences redirect industrial activity, and relocation changes the spatial concentration of production. When concentration feeds back into location attractiveness and emissions intensity, policy can alter the stability of the industrial allocation. The institutional environment then matters because regional and collective objectives attach different values to the relocation of mobile production.

The baseline model isolates this mechanism by holding aggregate industrial activity fixed, adopting an affine emissions-intensity function, and studying policy interaction around a local stability boundary. These assumptions yield an explicit policy threshold while also indicating the main dimensions along which the mechanism can be extended.

A central parameter in the baseline model is the response of emissions intensity to local industrial density. With
\[
\varepsilon'(u)>0,
\]
greater concentration raises aggregate emissions for a fixed industrial base and increases the carbon-policy burden associated with additional local production. Carbon pricing then weakens the agglomeration feedback and can change spatial stability.

The sign and magnitude of this relationship are empirical objects. Industrial concentration may raise emissions intensity through congestion, pressure on energy infrastructure, limited access to cleaner inputs, or changes in the composition of production. It may also improve environmental efficiency through shared infrastructure, technological spillovers, specialized services, and access to cleaner production technologies. For a general differentiable function $\varepsilon(u)$, the effect of carbon policy on the local agglomeration feedback depends on $\varepsilon'(\bar u)$. Positive, zero, and negative values therefore imply different spatial responses to climate policy.

The fixed-mass assumption provides a complementary separation. Because
\[
\int_\Omega u(x,t)\,dx=U_0,
\]
the benchmark attributes changes in aggregate emissions to the spatial organization of a given industrial base. In observed economies, carbon policy can also affect aggregate production and technology. Allowing these margins to interact with relocation would make it possible to assess how spatial, scale, and technological responses jointly shape emissions.

The threshold $\tau_c$ characterizes the response of the homogeneous allocation to small spatial perturbations. It identifies the policy level at which the local dynamics change from amplification to decay. Beyond this local boundary, nonlinear dynamics determine the spatial configuration that ultimately emerges. Initial conditions and temporary policy differences may then have persistent effects on industrial location, creating scope for multiple spatial outcomes, path dependence, and hysteresis.

The same mechanism can also be studied in richer spatial environments. Transport links, market access, input-output relationships, and network structure may alter both the propagation of relocation incentives and the spatial patterns that emerge. A network formulation provides one natural way to represent interactions across connected regions or production nodes. Related models on connected landscapes show how spatial spillovers and cooperation can be represented among interacting jurisdictions \citep{CaravaggioDiLiddoMartiradonna2025}.

The institutional result shows that regional and collective marginal incentives can select policies on opposite sides of the stability boundary,
\[
\tau^{NC}<\tau_c<\tau^B.
\]
Its economic content lies in the difference between retaining industrial activity locally and evaluating relocation from the perspective of the spatial economy as a whole. Policy coordination can therefore affect the spatial regime in addition to the level of the carbon instrument.

The current cooperative formulation isolates this distinction through transferable utility and finite-horizon policy commitment. Alternative institutional environments may change both policy selection and the distribution of cooperative gains. Asymmetry, limits on transfers, commitment, and renegotiation can all modify the incentives identified in the benchmark. Dynamic bargaining is particularly relevant when current agreements affect future bargaining opportunities and the economic state evolves between negotiations \citep{Flamini2020}.

The welfare analysis can also be extended beyond territorial emissions accounting. The present specification assigns an emissions cost where production occurs, while climate damages ultimately depend on aggregate emissions. Incorporating broader environmental damages and alternative accounting rules would allow the spatial relocation mechanism to be studied together with the global climate externality. Border adjustments, revenue recycling, and alternative transfer arrangements are further institutional dimensions that may affect the incidence of climate policy.

The regional evidence identifies the first empirical relationship highlighted by the model: industrial activity and territorial emissions are positively associated across European regions. A structural assessment of the spatial mechanism additionally requires evidence on how industrial location responds to relative policy exposure and on how local concentration affects emissions intensity.

Spatially resolved firm and facility data, combined with policy variation across locations, would help separate relocation from changes in output and technology. Such evidence would allow the spatial mechanism to be evaluated more directly and would help determine which policy-dependent spatial regimes are empirically relevant.

These extensions point to broader questions concerning industrial dynamics, institutional design, and empirical identification. The stability mechanism provides a common structure linking these directions.

\section{Conclusion}

Carbon policy can influence the geography of production through firms' location incentives. When industrial concentration reinforces local attractiveness, this spatial response can generate a stability threshold: below a critical policy level, small spatial differences are amplified; above it, they are dampened. If emissions intensity also rises with local concentration, the spatial regime has an aggregate environmental consequence even when total industrial activity is fixed.

The policy game adds an institutional dimension. A regional government values industrial activity located within its own territory, while cooperation internalizes the economy-wide effect of relocation on concentration. Regional and collective incentives can therefore select policies on opposite sides of the spatial threshold. The numerical exercise illustrates the resulting difference in spatial dynamics and aggregate emissions.

European regional evidence complements the theoretical analysis by documenting a positive association between industrial activity and industrial CO$_2$ emissions across two distinct emissions measures. The empirical results are descriptive, but they identify the margins that would be needed for a fuller structural assessment of policy-induced relocation and concentration.

The analysis therefore provides a benchmark linking climate policy, agglomeration, and institutional design through the spatial organization of industrial production. Extending this framework to richer industrial dynamics, alternative institutional settings, and more spatially resolved empirical analysis offers a natural direction for further research.

\section*{Declaration of generative AI and AI-assisted technologies}

OpenAI ChatGPT was used for language editing, editorial support, and assistance with refinement of code used for numerical analysis and figures. The author reviewed and verified all AI-assisted content and takes full responsibility for the final manuscript.

\end{document}